
\documentclass[a4paper,bibtotoc,chapterprefix,11pt]{article}

\usepackage{graphicx,color}

\usepackage{amsmath}
\usepackage{amsfonts}
\usepackage{amssymb}
\usepackage{amsthm}

\usepackage{a4wide}

\usepackage[mathscr]{eucal}

\usepackage{index}

\usepackage{fancyhdr}

\usepackage{times}

\pagestyle{fancy}
\fancyhead[L]{}


\fancyhead[C]{}
\fancyfoot[C]{}
\fancyhead[R]{\thepage}
\fancyhead[L]{\today}
\fancyhead[RE]{}



\newtheorem{thm}{Theorem}[section]

\newtheorem{lemma}[thm]{Lemma}

\newtheorem{fact}[thm]{Fact}




\newcommand{\nc}[1]{\expandafter\newcommand\csname#1\endcsname}

\nc{gX}[1]{\nc{g#1}{\mathfrak #1}} 

\nc{bX}[1]{\nc{b#1}{\mathbb #1}} 

\nc{aX}[1]{\nc{a#1}{\mathcal #1}} 

\gX G
\gX H
\gX C
\gX S
\gX A

\bX A

\aX H
\aX G
\aX S

\addtolength{\textwidth}{10pt}


 \newlength{\ppshort}
 \setlength{\ppshort}{\textwidth}
 \addtolength{\ppshort}{-80pt}










\usepackage{mathrsfs}

\newcommand{\E}{\mathbb{E}}

\newcommand{\El}{\E_{\textsf{lin}}}

\newcommand{\lA}{\mathcal A}

\newcommand{\constone}{\mathbf{1}}

\newcommand{\gft}{\textup{GF}(2)}

\begin{document}

\title{An Approximation Algorithm for $\#k$-SAT}
\author{Marc Thurley\footnote{supported by Marie Curie Intra-European Fellowship 271959 at the Centre de Recerca Matem\`{a}tica, Bellaterra, Spain}
\\
Centre de Recerca Matem\`{a}tica\\
Bellaterra, Spain}

\maketitle
\begin{abstract}
We present a simple randomized algorithm that approximates the number
of satisfying assignments of Boolean formulas in conjunctive normal form.
To the best of our knowledge this is the first algorithm 
which approximates $\#k$-SAT for any $k\ge 3$ within
a running time that is not only non-trivial, but also significantly better
than that of the currently fastest exact algorithms for the problem.
More precisely, our algorithm is a randomized approximation scheme
whose running time depends polynomially on the error tolerance 
and is mildly exponential in the number $n$ of variables of the
input formula. For example, even stipulating sub-exponentially small error tolerance, 
the number of solutions to $3$-CNF input formulas can be approximated 
in time $O(1.5366^n)$. For $4$-CNF input the bound increases to $O(1.6155^n)$. 

We further show how to obtain upper and lower bounds on the number 
of solutions to a CNF formula in a controllable way. Relaxing the
requirements on the quality of the approximation, on $k$-CNF input
we obtain significantly
reduced running times in comparison to the above bounds.
\end{abstract}

\pagestyle{plain}

\newcommand{\sat}{\textup{sat}}
\newcommand{\forced}{\textup{Forced}}
\newcommand{\fplin}{\textup{F}_{\oplus}}

\section{Introduction}
The design and analysis of algorithms that determine the
satisfiability or count the models of $k$-CNF formulas
has quite some tradition. In the case of the satisfiability 
problem the earliest algorithm with a worst case running time 
which is significantly better than the trivial $\text{poly}(n)2^n$ 
bound dates back to at least 1985 \cite{monspe85}. The time bounds
have improved gradually over the years with most recent results
(only a few of which are \cite{iwatam04, rol06, hermossch11,her11})
being analyses of randomized algorithms that 
have been obtained from either Sch\"oning`s
algorithm \cite{sch99}, the algorithm of Paturi, Pudl\'{a}k, Saks, and Zane~\cite{patpudsaksan05},
or a combination of both. 
The currently fastest algorithm for $3$-SAT by Hertli \cite{her11}
running in time $O(1.30704^n)$ falls roughly into the second category.
The corresponding counting problems have seen similar improvements 
\cite{dub91,zha96a,dahjonwah05,kut07,furkas07,wah08}
over the trivial time bound, with the current best worst case running time
for $\#3$-SAT being $O(1.6423^n)$ obtained by Kutzkov~\cite{kut07}.

Quite surprisingly, however, the situation is completely different
for the \emph{approximation} of $\#k$-SAT. To the best of my knowledge 
not even small improvements over the trivial worst case time bound are 
known.\footnote{Disregarding, of course the pathological fact that
exact methods can be interpreted as approximation algorithms, as well.}
This, however, does not seem to be due to a general lack of 
interest in the problem itself.
From a complexity theoretical point of view, for example,
several already classic papers 
\cite{sto76,valvaz86,rot96} study questions closely related
to direct algorithmic problems in $\#k$-SAT approximation.
In particular Valiant and Vazirani~\cite{valvaz86} bound the
complexity of the approximation problem from above 
by reduction to SAT, and hence settle its complexity in a certain sense.

While theoretical results on the approximation of $\#k$-SAT are rather old,
there are several heuristic approaches to the problem,
that have all appeared only fairly recently.
Motivated by questions of practicability, these results
focus on methods
that can be shown empirically to work well, 
while sacrificing some (at least theoretically) desirable
properties.
That is, some of these approaches yield approximations without any guarantee
on the quality of the approximation \cite{weisel05,gogdec07}.
Others yield reliable lower and upper bounds \cite{gomsabsel06,gomhofsabsel07,krosabsel11}
which, in certain cases, are surprisingly
good although generally their quality is rather low. 
In particular, this line of work does not provide rigorous bounds on running times
and neither does it yield rigorous quality estimates of the approximation computed.

With regard to the above results, the lack of competitive worst case bounds for
$\#k$-SAT approximation algorithms seems to be due to several factors.
First of all the exact algorithms found in the literature and their analyses
do not seem to carry over easily to the approximation problem. Secondly,
complexity theoretical insights are usually not considered applicable in the context of
designing fast exponential algorithms.
An example is the technique of Valiant and Vazirani
which leads to a significant blow up in formula size. 
And thirdly, it is not clear which of the known algorithmic ideas 
used in the heuristic approaches could at least in principle show a
good worst case behavior.

\subsection{Contributions}
In this paper we will see that one can indeed not only improve upon the trivial
worst-case time bound mentioned above. But the algorithm we will present
also provides arbitrarily good precision in significantly less time than known exact methods.
To be more precise, the algorithm we present 
is a \emph{randomized approximation scheme} for $\#k$-SAT for every $k \ge 3$.
Given a freely adjustable error tolerance $\epsilon > 0$, randomized 
approximation schemes produce an output 
that is within a factor of $e^\epsilon$ of the number $\#F$ of solutions of
some input formula $F$.

We obtain the following main result, which we state here 
only for $k=3$ and $k=4$. The general result will be discussed in Section
\ref{sec:algo}. 
\begin{thm} \label{thm:main_popular}
There is a randomized approximation scheme running in time
$O(\epsilon^{-2}\cdot 1.5366^{n})$ for $\#3$-SAT  and in time $O(\epsilon^{-2} \cdot 1.6155^{n})$ for $\#4$-SAT.
\end{thm}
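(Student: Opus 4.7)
The plan is to combine Stockmeyer-style approximate counting via random GF(2)-hashing with a new branching algorithm for the resulting \emph{hybrid} problem of $k$-CNF plus XOR constraints. At a high level, I would reduce $\#k$-SAT approximation to polynomially many decision (or small-scale counting) queries on hybrid instances $F \wedge L$, where $L$ is a random GF(2)-linear system of increasing rank, and then solve each such query within a worst-case bound significantly better than the exact $\#k$-SAT time.

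For the outer reduction, for each $m \in \{0,1,\ldots,n\}$ I would draw a uniformly random GF(2)-linear map $h:\{0,1\}^n \to \{0,1\}^m$ and count how many trials leave $F \wedge (h(x)=0)$ satisfiable. Pairwise independence of $h$ makes this fraction a polynomially computable function of $\#F/2^m$; repeating $\Theta(\epsilon^{-2})$ trials per $m$ and binary-searching over $m$ locates $m^* \approx \log_2 \#F$ with constant confidence, and a median-of-means amplification turns the concentration bound into an $e^\epsilon$-multiplicative approximation. This accounts for the $\epsilon^{-2}$ factor in the theorem.

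The core of the argument is a fast algorithm for the hybrid decision problem. After Gaussian elimination on $L$, exactly $m$ variables are pinned as linear functions of the remaining $n-m$ \emph{free} variables, so naive enumeration already costs $2^{n-m}$, which is inadequate. Instead, I would design a branching algorithm operating directly on the hybrid representation and exploiting the $k$-CNF structure of $F$ together with the linearity of $L$: clause-directed branching either sets a variable and simplifies $F$, or adds a linear equation and increases the effective rank of $L$. The recurrence is to be tuned so that the total running time across all recursion paths is $\tau_k(m/n)^n$, where $\tau_k$ is a continuous function attaining its maximum at some interior density $\mu^*$. Summing over all $m \in \{0,\ldots,n\}$, the worst-case term $\tau_k(\mu^*)^n$ dominates, and numerical evaluation should give $1.5366^n$ for $k=3$ and $1.6155^n$ for $k=4$.

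The main obstacle will be the hybrid branching analysis, for two reasons. First, substituting the equations of $L$ into $F$ can widen the clauses, so the algorithm must process $L$ and $F$ in parallel, forcing the analysis to use a potential function that blends clause-width/density with linear-constraint count. Second, since a constant fraction of Stockmeyer trials will be unsatisfiable, the algorithm must \emph{refute} satisfiability within the same time bound; producing a branching rule that certifies infeasibility symmetrically to certifying feasibility is the subtle point. Finding a single recurrence that meets the advertised exponents at all ratios $m/n$, and verifying that the worst-case density $\mu^*$ is the claimed one, is the technical core I would develop in full in Section~\ref{sec:algo}.
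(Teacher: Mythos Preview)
Your approach is genuinely different from the paper's, and it contains a real gap.

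The paper's argument is elementary and does not require any new branching algorithm. It runs two off-the-shelf routines and balances them against a threshold $N$: (i) if $\#F\le N$, enumerate all solutions by repeatedly calling a $k$-SAT decision oracle (the PPSZ/Hertli algorithm with exponent $\beta_k=1-\mu_k/(k-1)$), which costs $O^*(2^{\beta_k n}N^{1-\beta_k})$; (ii) if $\#F>N$, estimate $\#F$ by naive uniform sampling of assignments, which costs $O^*(\epsilon^{-2}2^n/N)$. Equating the two bounds gives $N=2^{n(1-\beta_k)/(2-\beta_k)}$ and overall time $O^*(\epsilon^{-2}2^{n/(2-\beta_k)})$. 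Plugging in $\beta_3\approx 0.3864$ and $\beta_4\approx 0.5548$ yields $1.5366^n$ and $1.6155^n$. The hashing/Valiant--Vazirani machinery appears in the paper only for the separate upper-bound theorem, and even there it gives merely a constant-factor approximation that would need an extra Jerrum--Sinclair step.

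Your proposal, by contrast, hinges entirely on a ``hybrid $k$-CNF + XOR'' branching algorithm that you have not actually specified. You correctly note that substituting the linear constraints into $F$ blows up clause width, so known $k$-SAT algorithms do not apply to $F\wedge L$; but the remedy you sketch---a new branching rule with a blended potential function whose worst-case recurrence happens to evaluate to $1.5366$ and $1.6155$---is a hope, not an argument. No recurrence is written down, no branching rule is given, and the constants you are aiming for are not natural targets for such an analysis: they are precisely $2^{1/(2-\beta_k)}$, an artifact of the enumeration/sampling trade-off and the PPSZ exponent, not of any DPLL-style recursion on hybrid instances. Until you exhibit the branching rule and its recurrence, the core of the proof is missing.
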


For $\#3$-SAT this algorithm is already significantly faster than the 
currently fastest
exact algorithm from~\cite{kut07} which runs in time $O(1.6423^n)$.
For $\#4$-SAT the benefit of approximation is even more impressive, as the
best bound for exact methods is still the $O(1.9275^n)$ bound of 
the basically identical algorithms of Dubois~\cite{dub91} and Zhang~\cite{zha96a}.

We will see that the algorithm of Theorem~\ref{thm:main_popular}
is not complicated and monolithic like the branching algorithms usually
employed in exact counting results. But it is actually 
a combination of two very simple and very different algorithms.
The main reason for considering this combination relies on two pieces of
intuition.
On the one hand, if a formula has few solutions, then it is not too bad an idea to
compute their number by simply enumerating them.
On the other hand, if a formula has many solutions, then
a quite trivial sampling algorithm should yield good results. 

Observe that the result of Theorem~\ref{thm:main_popular} 
can already be used to compute e.g. $\#F$ exactly in time $O(1.5366^{n})$
for any $3$-CNF formula which has only a sub-exponential number of solutions.
To achieve this we only have to set $\epsilon$ appropriately.
However, we shall see below, that this can also be achieved 
in significantly less time.
Motivated by the heuristic results on the approximation of 
$\#k$-SAT described above, we also study the effect of
weaker requirements on the 
approximation bounds.
It seems, of course, perfectly reasonable to assume that weaker bounds 
should come at the benefit of dramatically improved
running time bounds. We will therefore show that this is the case.
With respect to lower bounds we obtain:
\begin{thm}[Lower Bound Algorithm]\label{thm:lower}
There is a randomized algorithm which, on input a $3$-CNF formula $F$ on $n$ variables
and a natural number $L$, performs 
the following in time $O(L^{0.614}\cdot 1.30704^n)$:
\begin{itemize}
 \item If $\#F > L$ it reports this with probability at least $3/4$.
 \item If $\#F \le L$ then with probability at least $3/4$ it reports this and outputs the correct value $\#F$. 
\end{itemize}
Furthermore, there is a deterministic algorithm solving this task in time
$O(N^{0.585}\cdot 1.3334^n)$.
\end{thm}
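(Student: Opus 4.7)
The plan is to combine a preliminary branching step with SAT-based self-reducible enumeration. Let $A$ denote a $3$-SAT decision algorithm with worst-case running time $T(n) = c^n$: for the randomized bound we take Hertli's algorithm with $c = 1.30704$, and for the deterministic bound the Moser--Scheder derandomization of Sch\"oning's algorithm with $c = 4/3$. Set $r := \lceil \log_2 L \rceil$, pick an arbitrary subset $S$ of $r$ variables of $F$, and consider the $2^r$ restricted formulas $F|_\beta$ obtained by fixing the variables in $S$ according to $\beta : S \to \{0,1\}$. For each such $\beta$ the algorithm enumerates the satisfying extensions by the standard self-reducibility tree search: it invokes $A$ on the current subformula, returns if the answer is UNSAT, and otherwise branches on the next unassigned variable. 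A global counter is incremented upon each complete satisfying assignment; as soon as the counter exceeds $L$ the algorithm halts and reports ``$\#F > L$''. If all $2^r$ branches terminate without the abort triggering, the counter equals $\#F$ and is output.

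For the running time, note that within the enumeration of a single $F|_\beta$ with $s_\beta$ satisfying extensions, every internal node of the recursion tree lies on the root-to-leaf path of at least one satisfying assignment. Hence the number of internal nodes, and hence the number of $A$-invocations in that branch, is $O(n \cdot s_\beta + 1)$, each costing time $T(n-r) = c^{n-r}$. Summing over all branches and using $\sum_\beta s_\beta \le L+1$ (by the abort rule) together with $2^r \le 2L$, the total running time is
\[
 O\bigl((n\cdot L + 2^r)\cdot c^{n-r}\bigr) \;=\; O\bigl(n\cdot L\cdot c^{n-r}\bigr) \;=\; O\bigl(n\cdot L^{1-\log_2 c}\cdot c^n\bigr).
\]
Substituting $c = 1.30704$ yields $1 - \log_2 c \approx 0.614$, and $c = 4/3$ yields $1 - \log_2 c \approx 0.585$, matching both claimed exponents; the polynomial factor in $n$ is absorbed into the exponential.

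The main obstacle is the one-sided Monte Carlo error of Hertli's algorithm: a spurious UNSAT answer silently prunes a whole subtree of solutions, producing an undercount that the algorithm itself cannot detect. The standard remedy I would apply is to independently repeat each invocation of $A$ roughly $\Theta(\log(nL))$ times and report UNSAT only if every repetition does, driving the per-call failure probability below $1/(4nL)$ and then taking a union bound over the at most $O(nL)$ calls to obtain overall failure probability at most $1/4$. The resulting polylogarithmic overhead is absorbed by the exponential and does not affect the claimed exponents. The deterministic version poses no such difficulty and directly inherits the corresponding $0.585$ exponent from $\log_2(4/3) \approx 0.415$.
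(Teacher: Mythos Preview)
Your proof is correct and follows essentially the same approach as the paper: enumerate solutions by self-reducibility using a (suitably boosted) SAT oracle, abort once $L$ solutions have been found, and control the Monte Carlo error by repeating each oracle call $\Theta(\log(nL))$ times and taking a union bound over the $O(nL)$ calls. The only difference is cosmetic: you force the first $r=\lceil\log_2 L\rceil$ levels of the search tree to be a complete binary tree and then upper-bound every remaining oracle call by its cost $c^{n-r}$ at level $r$, whereas the paper branches adaptively from the root throughout and derives the identical bound by arguing that the total oracle cost over a tree with at most $nL$ nodes is maximized when those nodes form a balanced binary tree of depth $\log(nL)$.
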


This lower bound algorithm will in fact be used in the proof of Theorem~\ref{thm:main_popular}
and relies on the above observation that we can simply use a SAT algorithm for enumerating all solutions
provided the input formula has only few. The time bounds mentioned thus arise from the SAT algorithms used
-- the randomized $3$-SAT algorithm by Hertli~\cite{her11} and the
deterministic one of Moser and Scheder~\cite{mossch11} (which is in fact a derandomized version of Sch\"oning's
algorithm). 

To obtain upper bounds, on the other hand,
we cannot use the high-solution part of Theorem~\ref{thm:main_popular}.
But,
although it might seem unreasonable to expect that this would yield a competitive running time, 
we can use an algorithm based on the bisection technique of Valiant and Vazirani~\cite{valvaz86}.
Interestingly an algorithm based on Valiant and Vazirani`s technique has been used already in the
heuristic result of~\cite{gomsabsel06}. Their approach, however, is quite different from ours and 
does not have a good worst-case behavior. 
 
By systematically 
augmenting and input formula $F$ with randomly chosen $\textup{GF}(2)$-linear constraints, 
the bisection technique
makes it possible to approximate $\#F$ by determining satisfiability of the augmented formulas.
The main difference of our approach to this classical scheme lies
in the observation that it is more reasonable
for our purposes to work directly with the system of linear equations obtained, instead of encoding
it into $k$-CNF. In this way we obtain the running time bounds which are valid even for general 
CNF input formulas.

\begin{thm}[Upper Bound Algorithm]\label{thm:approx_many_solutions}
There is an algorithm, which on input a CNF formula $F$ on $n$ variables
and an integer $\mu \le n$ takes time $O^*(2^{n-\mu})$
and performs the following with probability at least $2/3$:

It outputs a number $u \ge \mu$ such that $U := 2^{u + 3} \ge \#F$.
If furthermore $u > \mu$ then $2^{u}$ is a $16$-approximation of $\#F$.
\end{thm}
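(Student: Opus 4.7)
The plan is to realize the Valiant--Vazirani bisection idea with $\gft$-linear constraints, but, as motivated in the introduction, applied directly to the linear system rather than through its $k$-CNF encoding. For each $i\in\{\mu,\mu+1,\ldots,n\}$ independently draw a system $L_i$ of $i$ uniformly random affine forms $a^\top x = b$ over $\gft$, and let $X_i$ denote the number of assignments simultaneously satisfying $F$ and $L_i$. Linearity gives $\E[X_i]=\#F/2^i$, pairwise independence of random affine hashes gives $\Var[X_i]\le \E[X_i]$, and the indicator $Y_i := \mathbf{1}[X_i \ge 1]$ therefore obeys the one-sided tail bounds $\Pr[Y_i=1]\le \#F/2^i$ from Markov and $\Pr[Y_i=0]\le 2^i/\#F$ from Chebyshev.

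To evaluate $Y_i$ in time $O^*(2^{n-i})$, I would solve $L_i$ by Gaussian elimination, express its $i$ pivot variables as $\gft$-affine functions of the remaining $n-i$ free variables, substitute these expressions into $F$ to obtain an equivalent CNF formula on $n-i$ variables, and check satisfiability by brute force. Summing over all $i\ge \mu$ yields total time $O^*\bigl(\sum_{i\ge\mu}2^{n-i}\bigr) = O^*(2^{n-\mu})$, which still comfortably accommodates $\mathrm{poly}(n)$ independent repetitions at each level in order to amplify per-level failure probabilities to below $1/n^2$, say.

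The output $u$ is derived from the highest level at which the amplified SAT test succeeds: let $u^\star$ be the largest $i\ge\mu$ with $Y_i=1$ after amplification, and set $u^\star := \mu$ if no such $i$ exists. An appropriate constant offset between $u^\star$ and the reported $u$ is then chosen so that the Markov guarantee at level $u$ certifies the lower bound $\#F \ge 2^{u-4}$, while the Chebyshev guarantees at the suppressed levels $u+1, u+2,\ldots$ jointly certify the upper bound $\#F \le 2^{u+3}$. The $u=\mu$ branch then corresponds precisely to the case where no nontrivial lower bound on $\#F$ is certified, matching the weaker statement the theorem makes in that regime.

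The main obstacle will be calibrating these constants together with the number of per-level repetitions so that a single union bound over the $O(n)$ tests simultaneously delivers both the one-sided $U\ge \#F$ guarantee (which must hold even in the $u=\mu$ fallback) and the two-sided factor-$16$ guarantee when $u>\mu$, all within total failure probability $1/3$. A supporting check, easy but essential, is that the reduction of $F\wedge L_i$ to a CNF on the $n-i$ free variables preserves model counts exactly, so that the test indeed tracks $X_i$ rather than some relaxation of it.
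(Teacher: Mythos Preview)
Your plan is essentially the paper's: Valiant--Vazirani bisection with random $\gft$-linear constraints, deciding satisfiability of $F$ together with the linear system by enumerating the affine solution space directly rather than re-encoding into $k$-CNF. Two points deserve attention.

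First, two technical slips. Substituting $\gft$-affine forms for the pivot variables into a CNF $F$ does \emph{not} produce a CNF on the free variables (a single literal becomes a parity of several variables); this is harmless because you brute-force over all $2^{n-i}$ assignments anyway, so just drop the word ``CNF''. More importantly, you tacitly assume $L_i$ has rank exactly $i$, so that there are $n-i$ free variables; a uniformly random $i\times n$ matrix over $\gft$ can be rank-deficient, and then the solution space has size $2^{n-r}>2^{n-i}$, breaking your per-level time bound. The paper addresses this with a separate lemma showing that a random $m\times n$ matrix has rank at least $m-O(\log m)$ with high probability, and conditions on that event to recover $O^*(2^{n-\mu})$.

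Second, a simplification you are missing. The paper draws a \emph{single} random $n\times n$ system and uses its nested prefixes $A_\nu x=b_\nu$; then unsatisfiability of $F_\nu$ is monotone in $\nu$, so one may set $u$ to the least $\nu$ with $F_\nu$ unsatisfiable and read off the $\pm 3$ window around $\lceil\log\#F\rceil$ directly from the Hashing Lemma via geometric tails---no repetition at all. Your independent-per-level scheme also works, but the amplification you propose is unnecessary: the very bounds you state, $\Pr[Y_i=1]\le\#F/2^i$ and $\Pr[Y_i=0]\le 2^i/\#F$, already sum to $O(1)$ over all levels, so a plain union bound yields constant success probability. If you do insist on majority-vote amplification, note that it is effective only at levels bounded away from $\log\#F$; within $O(1)$ of the threshold the single-trial bias is near zero and no amount of repetition narrows the window, so those levels must in any case be absorbed into your constant offset.
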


\paragraph*{Remark.} 
This algorithm will actually work for upper bounding
$|S|$ for any set $S \subseteq \{0,1\}^n$ with a polynomial membership test. 
However, as this is a trivial
consequence of the proof, we consider only the case that 
the input is a CNF formula.

Moreover, we do not particularly focus on improving the approximation
ratio mentioned in Theorem~\ref{thm:approx_many_solutions}. 
Such an improvement is in fact unnecessary if we want to use this
algorithm to design a randomized approximation scheme:
We can combine the above algorithm with that of
Theorem~\ref{thm:lower} to obtain a $16$-approximation algorithm
for e.g. $\#3$-SAT which runs (up to a polynomial factor) within in the same time bound
as that stated in Theorem~\ref{thm:main_popular}.  This algorithm can then be plugged into 
a Markov chain by Jerrum and Sinclair \cite{sinjer89}
to boost the quality of approximation.
This yields a $(1 + \frac{1}{\textup{poly}(n)})$-approximation
algorithm incurring only a polynomial overhead in the computation.
Thus we have a second, although more complicated, algorithm that satisfies the claim of Theorem~\ref{thm:main_popular}.

\section{Preliminaries}
For a CNF formula $F$, let $\sat(F)$ be the set of its solutions and
$\#F = |\sat(F)|$. We shall always use $n$ to denote the number of variables of a
CNF formula under consideration. 
A \emph{randomized $\alpha$-approximation algorithm} $\bA$ for $\#k$-SAT 
outputs, on input a $k$-CNF formula $F$, a number $\bA(F)$ such that
\begin{equation}\label{eq:aapprox}
\Pr\left[\alpha^{-1}\#F \le \bA(F) \le \alpha \#F\right] \ge p.
\end{equation}
Where $p$ is some constant, independent of the input and strictly larger 
than\footnote{In the literature, usually either the value $p = 3/4$ or 
a further parameter $\delta$ such that $p = 1 - \delta$ seems to be common.
However, it is well-known that all of these can be translated into each other
with only polynomial overhead.}
$1/2$.
A \emph{randomized approximation scheme} for $\#k$-SAT, is then an algorithm 
which on input $F$ and a natural number $\epsilon^{-1}$ 
behaves like a randomized $e^\epsilon$-approximation algorithm.

We use the notation $x^1 = x$ and $x^0 = \bar x$. 
For a clause $C$, a variable $x$, and a truth
value $a \in \{0,1\}$, the \emph{restriction} of $C$ on $x=a$ is the
constant $\constone$ if the literal $x^a$ belongs to $C$, and 
$C \setminus \{ x^{1-a} \}$ otherwise. 
We write $C|_{x=a}$ for the restriction of $C$
on $x=a$.
A \emph{partial assignment} is a sequence of assignments $( x_1 = a_1,
\ldots, x_r = a_r )$ with all variables distinct.  Let $\alpha$ be a
partial assignment. We will use the notation $\alpha \cup (x= a)$ 
to denote the assignment $( x_1 = a_1,
\ldots, x_r = a_r, x = a)$.
If $C$ is a clause, we let $C|_\alpha$ be the
result of applying the restrictions $x_1 = a_1, \ldots, x_r = a_r$ to
$C$. Clearly the order of application does not matter. 
If $F$ is a CNF formula, we let
$F|_\alpha$ denote the result of applying the restriction $\alpha$ to
each clause in $F$, and removing the resulting $\constone$'s. We call
$F|_\alpha$ the \emph{residual} formula.

As we will use the algorithm of Paturi, Pudl\'{a}k, Saks, and Zane~\cite{patpudsaksan05} 
and a very recent paper by Hertli \cite{her11},
we need the constant
$$
\mu_k = \sum_{j=1}^\infty \frac{1}{j(j + \frac{1}{k-1})}.
$$
\section{The Algorithm}\label{sec:algo}
We are now able to state the main result in full detail.
\begin{thm}\label{thm:main}
For $k \ge 3$, $\#k$-SAT has a randomized approximation scheme running in 
time\footnote{We use the $O^*$ notation to suppress factors sub-exponential in $n$.}
$$
O^*\left(\epsilon^{-2} \cdot 2^{n\left(\frac{k-1}{k-1+\mu_k}\right)}\right).
$$
\end{thm}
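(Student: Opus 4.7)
The plan is to combine the lower-bound algorithm of Theorem~\ref{thm:lower} (generalized from $k=3$ to arbitrary $k\ge 3$) with the upper-bound algorithm of Theorem~\ref{thm:approx_many_solutions}, balancing their running times by choosing an appropriate threshold. For $k$-SAT the lower-bound algorithm should run in time $O^*(L^{\mu_k/(k-1)}\cdot 2^{n(1-\mu_k/(k-1))})$, since its analysis only uses PPSZ/Hertli to enumerate up to $L$ satisfying assignments and the constant $0.614$ in Theorem~\ref{thm:lower} is just $\mu_3/2$. Setting $u = \lceil n\mu_k/(k-1+\mu_k)\rceil$ and $L = 2^u$ equates this time bound with $O^*(2^{n-u})$, which is also the running time of Theorem~\ref{thm:approx_many_solutions} on input $\mu = u$. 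Both thus run in $O^*(2^{n(k-1)/(k-1+\mu_k)})$, matching the exponent of the claim.

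The algorithm proceeds as follows. First run the lower-bound algorithm with parameter $L$. With constant probability it either outputs the exact value $\#F$ (when $\#F \le L$) or reports $\#F > L$. In the latter case, run the upper-bound algorithm with $\mu = u$. It returns some $u' \ge u$ with $2^{u'+3}\ge \#F$. If $u' > u$, Theorem~\ref{thm:approx_many_solutions} guarantees that $2^{u'}$ is a $16$-approximation to $\#F$. If instead $u' = u$, then $L \le \#F \le 8L$ (using the lower-bound branch together with $2^{u+3}\ge \#F$), so $L$ itself is within a constant factor of $\#F$. Either way, combining the two subroutines yields a randomized constant-factor approximation for $\#k$-SAT in time $O^*(2^{n(k-1)/(k-1+\mu_k)})$.

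To turn this into a randomized approximation scheme, I would feed the constant-factor approximator into the standard Jerrum--Sinclair self-reducibility reduction~\cite{sinjer89}. Since $\#k$-SAT is self-reducible (fixing a variable yields a smaller $k$-CNF instance), their Markov-chain construction boosts any constant-factor approximator into an $e^\epsilon$-approximator with a multiplicative overhead of $O(\epsilon^{-2}\textup{poly}(n))$, producing the final bound $O^*(\epsilon^{-2}\cdot 2^{n(k-1)/(k-1+\mu_k)})$.

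The main obstacle is twofold. First, Theorem~\ref{thm:lower} as stated only covers $k=3$, so I must verify that replacing Hertli's $3$-SAT algorithm with the corresponding PPSZ bound for general $k$ yields exactly the $L^{\mu_k/(k-1)}\cdot 2^{n(1-\mu_k/(k-1))}$ dependence on the enumeration budget; this amounts to reopening the PPSZ analysis and tracking the marginal probability that a random permutation reveals a new solution given a pool of at most $L$ partial solutions already found. Second, the probabilistic bookkeeping: each subroutine succeeds only with some constant probability, and the two-branch case analysis (whether $\#F \le L$ or $\#F > L$) together with the Jerrum--Sinclair stage must still yield a final output satisfying \eqref{eq:aapprox}. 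Standard independent-repetition and median amplification handles this at a polynomial cost, absorbed in the $O^*$.
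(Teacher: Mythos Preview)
Your proposal is correct, and the paper even mentions it explicitly as a second, more complicated route to Theorem~\ref{thm:main} (see the remark after Theorem~\ref{thm:approx_many_solutions}). However, the paper's primary proof is considerably simpler than what you sketch. Instead of invoking the Valiant--Vazirani upper-bound algorithm and then boosting via Jerrum--Sinclair, the paper handles the many-solutions case by \emph{naive uniform sampling} (Lemma~\ref{lem:many_sols}): draw $N' = \Theta(\epsilon^{-2}2^n/\#F)$ assignments uniformly at random and estimate $\#F$ by the empirical fraction of satisfying ones. This directly gives an $e^\epsilon$-approximation in time $O^*(\epsilon^{-2}2^n/N)$ whenever $\#F\ge N$, so no constant-factor-to-FPRAS boosting is needed. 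Balancing this against the few-solutions algorithm (Lemma~\ref{lem:few_sols}, which is the general-$k$ statement behind Theorem~\ref{thm:lower}) at $N = 2^{fn}$ with $f = (1-\beta_k)/(2-\beta_k)$ and $\beta_k = 1-\mu_k/(k-1)$ already yields the claimed exponent.

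Your stated ``main obstacle'' is also not a real obstacle. The generalization of Theorem~\ref{thm:lower} to arbitrary $k$ does not require reopening the PPSZ analysis: Lemma~\ref{lem:few_sols} treats the $k$-SAT solver purely as a black box that finds a satisfying assignment in time $O^*(2^{\beta_k n})$, and derives the $O^*(2^{\beta_k n}N^{1-\beta_k})$ bound from an elementary search-tree argument (at most $nN$ nodes, and at level $\ell$ a call costs $O^*(2^{\beta_k(n-\ell)})$). Plugging in $\beta_k = 1-\mu_k/(k-1)$ immediately gives the dependence you stated.
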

As already outlined, the randomized approximation scheme of Theorem~\ref{thm:main} 
is a combination of two different algorithms. We will discuss the algorithm for the
case of few solutions now. The case of many solutions will be treated 
afterwards in Section~\ref{sec:many_sols}.

\subsection{Formulas with few solutions}
For formulas with few solutions we will now present an algorithm relying
on a simple enumeration of solutions by using a $k$-SAT algorithm as a subroutine.
This will also prove Theorem~\ref{thm:lower}.
\begin{lemma}\label{lem:few_sols}
Let $F$ be a $k$-CNF formula on $n$ variables
and let $\bA$ be an algorithm performing the following task  
in time $O^*(2^{\beta_k n})$. 
If $F$ is satisfiable, with probability at least $3/4$
it outputs a solution to $F$. If $F$ is unsatisfiable,
it reports this correctly.
 
Then, there is a algorithm, which on input $F$ and a natural number $N$, takes time
$O^{*}(2^{\beta_k n} N^{(1-\beta_k)})$,
and performs the following:
\begin{itemize}
 \item If $\#F > N$ it reports this with probability at least $3/4$.
 \item If $\#F \le N$ then with probability at least $3/4$ it reports this and outputs the correct value $\#F$. 
\end{itemize}
Furthermore, if the algorithm reports $\#F > N$ then this holds with certainty.
\end{lemma}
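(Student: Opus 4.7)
The plan is to invoke $\bA$ inside a standard self-reducibility scheme to enumerate solutions, combined with a brute-force pre-branching on the first $\mu \approx \log_2 N$ variables. It is this pre-branching that converts the naive bound $O^{*}(N \cdot 2^{\beta_k n})$ into the claimed $O^{*}(N^{1-\beta_k} \cdot 2^{\beta_k n})$.

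For the enumeration subroutine $\textup{Enum}(G)$ on a $k$-CNF $G$ on $n'$ variables, I call $\bA(G)$; on ``unsat'' I return $0$, otherwise I take the returned $\sigma \in \{0,1\}^{n'}$, count it, and recurse on $G_i := G|_{x_1=\sigma_1,\ldots,x_{i-1}=\sigma_{i-1},\, x_i = 1-\sigma_i}$ for each $i = 1, \ldots, n'$. Correctness rests on the partition of $\sat(G)$ according to the first coordinate at which a solution differs from $\sigma$: every $\tau \ne \sigma$ in $\sat(G)$ lies in exactly one $\sat(G_i)$, so $\#G = 1 + \sum_i \#G_i$, and each $G_i$ is again a $k$-CNF. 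At the top level I set $\mu := \min(n,\lceil \log_2(N+1)\rceil)$ and run $\textup{Enum}(F|_\alpha)$ for every $\alpha \in \{0,1\}^\mu$, maintaining a global counter and aborting with output ``$\#F > N$'' the moment it reaches $N+1$; such a report is always correct, since it comes with $N+1$ explicit witnesses.

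For the running time, every $\bA$-call is made on a subformula with at most $n-\mu$ variables and therefore costs $O^{*}(2^{\beta_k(n-\mu)})$. I split calls into ``root'' calls (one per $\alpha$, at most $2^\mu$) and ``descendant'' calls (spawned only by calls that returned a solution). The abort rule caps the number of solutions ever produced at $N+1$, and each of them spawns at most $n-\mu$ descendants, so the total number of $\bA$-calls is $O(2^\mu + N(n-\mu))$. Plugging in $2^\mu = \Theta(N)$ gives
$$
O^{*}\bigl( (2^\mu + N)\cdot 2^{\beta_k(n-\mu)} \bigr) \;=\; O^{*}\bigl( N^{1-\beta_k}\cdot 2^{\beta_k n} \bigr),
$$
as required; if $N \ge 2^n$ the claim is trivial by exhaustive enumeration.

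The only randomness is in $\bA$, and its only failure mode is outputting ``unsat'' on a satisfiable subformula, causing the final count to undershoot. Letting $M$ denote the total number of $\bA$-invocations, I amplify each individually by $O(\log M)$ independent repetitions so its per-call failure probability drops below $1/(4M)$; a union bound then yields overall success probability $\ge 3/4$, and the $O(\log M) = \textup{poly}(n,\log N)$ overhead disappears into $O^{*}$. The main point to be careful about is the accounting: the naked self-reduction without the $\mu$-level pre-branching still enumerates correctly, but its analysis only yields $O^{*}(N\cdot 2^{\beta_k n})$, so pre-branching at the right level $\mu \approx \log N$ is what lets the bound $\#F \le N$ be used optimally.
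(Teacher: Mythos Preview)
Your proof is correct. The approach differs from the paper's in a meaningful way, though both exploit the same underlying observation that calls to $\bA$ on restricted subformulas are cheaper.

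The paper does not pre-branch. Instead it grows a single binary search tree from the root: at each node it picks one unassigned variable, queries the amplified oracle $\bA^*$ on both restrictions, and keeps only the satisfiable children, stopping once the tree has $N$ leaves (or all solutions are found). The resulting tree has at most $nN$ nodes (Lemma~\ref{lem:2906111400}), and a node at depth $\ell$ incurs a SAT call of cost $O^*(b^{n-\ell})$ with $b=2^{\beta_k}$. The time bound then comes from an amortization: among binary trees with at most $nN$ nodes, the total cost $\sum_v b^{n-\ell(v)}$ is maximized by the completely balanced tree of depth $d=\log(nN)$, giving $\sum_{\nu=0}^{d} 2^{\nu} b^{n-\nu} = O^*\bigl(N^{1-\beta_k}2^{\beta_k n}\bigr)$.

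Your route replaces this amortization by an explicit pre-branching on $\mu\approx\log N$ variables, which forces \emph{every} $\bA$-call to live on at most $n-\mu$ variables; the bound then follows from a straight count of calls times a uniform per-call cost. This makes the analysis more transparent and isolates exactly where the saving $N^{-\beta_k}$ originates. The paper's argument, on the other hand, does not need to guess a cutoff level in advance and would adapt more smoothly if the per-level cost were less regular. Your solution-guided self-reduction (partitioning $\sat(G)\setminus\{\sigma\}$ by the first coordinate of disagreement) is also structurally different from the paper's plain variable-by-variable branching, but either enumeration scheme would work under either analysis.
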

Theorem~\ref{thm:lower} follows directly from this lemma 
by using the randomized $3$-SAT algorithm of Hertli~\cite{her11}
which has $\beta_3 = 0.3864$.
For the claim about the deterministic algorithm
we use the result of Moser and Scheder~\cite{mossch11},
with $\beta_3 = 0.4151$.

In the proof of the above lemma, we will use the 
following fact which is very easily proven. 
\begin{lemma}\label{lem:2906111400}
A rooted tree with $N$ leaves and depth (i.e. max root to leaf-distance) $n$
has at most $n\cdot N$ vertices in total. 
\end{lemma}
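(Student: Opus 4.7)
The plan is a one-paragraph covering argument: cover the vertex set of the tree by its $N$ root-to-leaf paths and bound the union. For each leaf $\ell$, let $P_\ell$ denote the unique path from the root down to $\ell$; by the depth hypothesis each $P_\ell$ contains at most $n+1$ vertices.

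The key step is to observe that every vertex $v$ of $T$ lies on at least one $P_\ell$. Starting from $v$ I iteratively descend to an arbitrary child; since $T$ is finite this descent terminates at some leaf $\ell$, and by construction $v$ lies on $P_\ell$. Consequently $V(T) = \bigcup_{\ell \text{ leaf}} V(P_\ell)$, and the union bound gives
$$
|V(T)| \;\le\; \sum_{\ell \text{ leaf}} |V(P_\ell)| \;\le\; N(n+1).
$$
The stated bound $nN$ is then recovered by slightly sharper bookkeeping: the root vertex appears in every one of the $N$ paths, so it is overcounted $N-1$ times in the sum, which yields $|V(T)| \le 1 + Nn$, and this is at most $nN$ outside the degenerate case $N=1$ (where the tree is simply a path and the bound can be checked directly, or the convention ``depth $n$ = longest path of $n$ vertices'' is adopted).

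There is essentially no obstacle here; the lemma is a one-line packaging of the path-covering observation, and the only care required is keeping the edge- versus vertex-count convention for ``depth'' consistent with the $nN$ form of the bound.
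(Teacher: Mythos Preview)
The paper does not actually prove this lemma; it simply calls it ``very easily proven'' and only ever uses the consequence that the tree has $O(nN)$ nodes. Your covering argument by root-to-leaf paths is the standard one and is correct, yielding $|V(T)|\le N(n+1)$, or $Nn+1$ after subtracting the overcount of the shared root.

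One small slip: you assert that $1+Nn\le nN$ once $N>1$, but of course $1+Nn>Nn$ always. This is not a flaw in your argument but in the lemma's stated constant: with the usual edge-distance convention for depth, a single root-to-leaf path already has $n+1$ vertices and $N=1$ leaf, so the literal bound $nN$ fails there. Your two suggested repairs---reading ``depth $n$'' as ``longest root-to-leaf path has $n$ vertices,'' or simply settling for the bound $Nn+1$---are both fine, and either suffices for the application in Lemma~\ref{lem:few_sols}, which only needs $O(nN)$.
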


\begin{proof}[Proof of Lemma~\ref{lem:few_sols}]
Note first, that by a standard trick we can boost the success probability
of $\bA$. Assume that, as provided by the statement of the lemma,
we have error probability at most $1- p \le 1/4$.
Then the probability of erring in $M$ independent repetitions is at
most $(1-p)^M \le e^{-pM}$. Call the boosted version of this algorithm $\bA^*$.

We shall fix a good value for $M$. Below we will see that
algorithm $\bA^*$ will be queried a number $O(nN)$ of times,
for some $N \le 2^n$,
each time on a formula of at most $n$ variables. The probability that the algorithm 
errs in any of these queries, is at most
$nN \cdot e^{-pM}$.
So choosing $M$ within a constant factor of $\log nN$ (which is polynomial in $n$) 
allows us to 
condition on the SAT algorithm not erring in any of the $O(nN)$ queries.
The probability of that latter event to happen is close to $1$.
And as this is the only possible source of failure of the algorithm,
we will easily achieve a success probability of $3/4$ in the end.

\paragraph*{The algorithm.}

Check if $F$ is satisfiable, using $\bA^*$, and if so,
perform the following. Inductively, construct a search tree associated with partial assignments $\alpha$,
such that $F\vert_\alpha$ is satisfiable.
For a leaf in the current search tree associated with some assignment $\alpha$, 
choose a variable $x$ from $F\vert_\alpha$
and check $F\vert_{\alpha \cup (x=0)}$ and $F\vert_{\alpha \cup (x=1)}$ for satisfiability
using the algorithm $\bA^*$.
For each of the satisfiable restrictions add a new child to the current leaf 
in the search tree.
We stop the algorithm, if it has $N$ leaves, or if it has found all of the
solutions of $F$.
Traversing this tree, e.g. in a depth first manner we can implement this procedure
in polynomial space (not taking the space need of $\bA^*$ into account).

\paragraph*{Time.}
Consider the search tree this algorithm produces.
As it has at most $N$ leaves, and depth at most $n$, we have (recall Lemma~\ref{lem:2906111400})
at most $nN$ nodes overall in the tree.

Observe first, that at each node we perform at most $2$ queries to the SAT algorithm $\bA^*$.
A node of level $\ell$ in the tree incurs queries taking time at most $O^*(b^{n-\ell})$ for $b = 2^{\beta_k}$.
Therefore, we can give an upper bound of the overall time spent in answering all queries
by bounding the time spend on a completely balanced binary search tree of depth $d = \log nN$.

Let $T(d,n)$ denote the overall time spend to run the algorithm on a balanced binary search tree with $d$ 
levels with an $n$ variable formula.
Then, up to a sub-exponential factor for the time spent at each node in the tree, $T(d,n) = b^{n} + 2T(d-1,n-1)$.
Note that $T(0,n) = 1$, and thus
$$
T(d,n) = \sum_{\nu = 0}^d 2^\nu b^{n-\nu}
$$
which yields the claimed bound.
\end{proof}

\subsection{Formulas with many Solutions.}\label{sec:many_sols}
We use the following simple folklore algorithm which can be found
e.g. in Motwani and Raghavan's book \cite{motrag95}.
Given a CNF formula $F$ on $n$ variables, 
choose an assignment from $\{0,1\}^n$ uniformly at random.
Repeat this process a number $N$ of times and let $X$
be the number of solutions of $F$ among these $N$ trials.
By a simple argument (see e.g. Theorem 11.1. in \cite{motrag95}),
if
$$
N = \Omega\left(\dfrac{2^n}{\epsilon^2 \#F}\right)
$$
then with probability at least $3/4$, 
the value $X\cdot \dfrac{2^n}{N}$ is an $e^\epsilon$-approximation
of $\#F$. 
Hence, we have the following
\begin{lemma}\label{lem:many_sols}
Let $F$ be an $n$ variable CNF formula with at least $N$ solutions
and $\epsilon^{-1}$ a natural number.
Then there is an algorithm which, in time $O^*\left(\dfrac{2^{n}}{\epsilon^{2}N}\right)$
yields a randomized $e^\epsilon$-approximation of $\#F$.
\end{lemma}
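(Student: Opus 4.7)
The plan is to implement the Monte Carlo procedure sketched in the paragraph preceding the lemma, with the sample count fixed in terms of the known lower bound $N$ on $\#F$ rather than the unknown $\#F$ itself. I would set $M := \lceil c\cdot 2^n/(\epsilon^2 N)\rceil$ for a suitable absolute constant $c$, draw $M$ independent uniformly random assignments $y_1,\dots,y_M \in \{0,1\}^n$, test each one against $F$ in polynomial time to form indicators $X_i = \mathbf{1}[y_i \in \sat(F)]$, and output the estimator $\hat Z := (2^n/M)\sum_{i=1}^M X_i$. Since the per-sample work is polynomial and $M = O^*(2^n/(\epsilon^2 N))$, the overall running time matches the claim.

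For correctness, the $X_i$ are i.i.d.\ Bernoullis with parameter $p = \#F/2^n$, so $\E[\hat Z] = \#F$ and $\Var[\hat Z] \le (2^n)^2 p / M$. Chebyshev's inequality then yields
\[
\Pr\!\left[\,|\hat Z - \#F| > \epsilon\,\#F\,\right] \;\le\; \frac{2^n}{\epsilon^2\, M\cdot \#F} \;\le\; \frac{2^n}{\epsilon^2\, M\cdot N} \;\le\; \frac{1}{c},
\]
where the middle inequality uses the hypothesis $\#F \ge N$ and the last is the definition of $M$. Choosing $c = 4$ drives the failure probability below $1/4$.

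It then remains to convert the multiplicative guarantee $(1-\epsilon)\#F \le \hat Z \le (1+\epsilon)\#F$ into the $e^\epsilon$-approximation form prescribed by the definition of a randomized approximation scheme. For $\epsilon \in (0,1]$ one has $1+\epsilon \le e^\epsilon$ and $1-\epsilon \ge e^{-2\epsilon}$, so replacing $\epsilon$ by an absolute constant multiple of itself at the outset (a rescaling that is absorbed into $c$) makes $\hat Z$ an $e^\epsilon$-approximation with probability at least $3/4$. The only mildly delicate point is this cosmetic conversion between the multiplicative forms; the variance bound and its conditioning on the hypothesis $\#F \ge N$ are both entirely routine.
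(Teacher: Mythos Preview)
Your proposal is correct and is exactly the approach the paper takes: the paper merely sketches the same Monte Carlo estimator, cites it as folklore (Theorem~11.1 in Motwani--Raghavan), and states the lemma without a further proof. Your write-up is in fact more detailed than what appears in the paper; the only quibble is that the inequality $1-\epsilon \ge e^{-2\epsilon}$ fails near $\epsilon=1$, but as you note this is purely cosmetic and is handled by the constant rescaling you already mention.
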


\subsection{Combining the algorithms}
We shall now prove Theorem~\ref{thm:main} by combining both of the above algorithms.
Let $F$ be a $k$-CNF formula on $n$ variables and $\epsilon^{-1}$ a natural number.
We run the algorithm of Lemma~\ref{lem:few_sols} with a parameter $N$.
The exact value of $N$ will be determined later.
Note that if the algorithm reports $\#F \le N$, it also computes $\#F$ exactly with probability at least $3/4$.
Otherwise, if the algorithm reports that $\#F > N$, we know with certainty that this is the case.
Hence, given that the algorithm reports the latter, the algorithm of Lemma~\ref{lem:many_sols} will take time
$O^*\left(\dfrac{2^{n}}{\epsilon^{2}N}\right)$ to yield an $e^\epsilon$-approximation of $\#F$.

It remains to bound the running time which amounts to optimizing the cutoff parameter $N$. 
For every choice of $N$, the combined algorithm works in time within a sub-exponential factor of
$$
\max \{2^{\beta_k n} N^{(1-\beta_k)},\dfrac{2^{n}}{N}\}.
$$
Let $f$ be such that $\log_2 N = f\cdot n$.
Then this maximum translates into $\max \{\beta_k + f (1-\beta_k), 1-f\}$.
Since $(\beta_k  + f(1-\beta_k)$ is increasing and $1-f$ is decreasing in $f$,
the minimum over all $f$ of the maximum of the two is obtained when $f$ is chosen so as to make them 
equal, that is, at
$$
f = \dfrac{1-\beta_k}{2-\beta_k}.
$$
This translates into an overall running time of
$$
O^*\left(2^{\frac{n}{2-\beta_k}}\right).
$$
Recall that $\beta_k$ determines the running time $O^*(2^{\beta_kn})$ of the subroutine consisting of a
randomized $k$-SAT algorithm, used in the algorithm of Lemma~\ref{lem:few_sols}. 
We shall have a look at these running times, now.
\paragraph{The case $k \ge 5$.}
The algorithm of Paturi, Pudl\'{a}k, Saks and Zane \cite{patpudsaksan05},
can be used as the subroutine randomized $k$-SAT algorithm,
which has a running time of
\begin{equation}\label{eq:ppsz_time}
O^*\left(2^{(1-\frac{\mu_k}{k-1})n}\right).
\end{equation}
Hence, we have here, $\beta_k = 1 - \frac{\mu_k}{k-1}$ which yields the claimed bound. 

\paragraph{The cases $k = 3$ and $k=4$.}
For these values of $k$, several improvements over the PPSZ algorithm have been presented.
The currently fastest one is that by Hertli \cite{her11}, 
whose bounds match those of the PPSZ algorithm in the unique-SAT case.
We thus also have here the corresponding bound of equation \eqref{eq:ppsz_time}.
\section{Upper bounds}
In this section we will prove Theorem~\ref{thm:approx_many_solutions}
by presenting a simple algorithm producing upper bounds on
$\#F$. We will use Valiant and Vazirani's 
bisection technique~\cite{valvaz86} 
and its application to approximate counting.
We will therefore consider random $\gft$-linear systems of equations of the form $A x = b$. 
For some $m \le n$
these consist of an $m \times n$ matrix $A$, an $m$ dimensional 
vector $b$ and a vector $x$ representing the variables of $F$. The entries of $A$ and $b$ 
are chosen independently and uniformly at random from $\{0,1\}$.
As such systems give rise to a family of pairwise independent hash functions of the form
$h(x) = Ax - b$, we will use the following 
well known
\begin{fact}[Hashing Lemma]
\label{lem:hash}
 Let $F$ be a CNF formula, and $A x = b$ an $m \times n$ random linear system of equations. Then 
 $$
  \Pr\left[\left|\{x \in \sat(F) \mid A x = b\}\right| \notin (1-\epsilon,1+\epsilon)\cdot 2^{-m}\cdot \#F\right] 
  \le \dfrac{(1-2^{-m})2^m}{\#F \cdot \epsilon^2}.
 $$
\end{fact}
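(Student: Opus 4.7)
The natural approach is to apply Chebyshev's inequality to the random variable counting satisfying assignments that also satisfy the random linear system. Specifically, for each $x \in \sat(F)$ define the indicator $Y_x = \mathbf{1}[Ax = b]$, and set $X = \sum_{x \in \sat(F)} Y_x$, which is exactly the cardinality of the set appearing in the statement. Since each coordinate of $Ax - b$ is a nontrivial $\gft$-linear combination of the independent uniform bits in the row of $A$ together with the corresponding bit of $b$, each coordinate is uniform on $\{0,1\}$ and the coordinates are independent, so $\Pr[Ax = b] = 2^{-m}$ and hence $\E[X] = \#F \cdot 2^{-m}$.

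Next I would verify pairwise independence: for any two distinct $x, y \in \{0,1\}^n$, the pair $(Ax - b,\, Ay - b)$ is uniform on $\{0,1\}^m \times \{0,1\}^m$. This follows row by row since, working over $\gft$, $x \neq y$ guarantees that some coordinate where they differ makes the map $(A_i, b_i) \mapsto (A_i x - b_i,\, A_i y - b_i)$ a bijection of $\{0,1\}^{n+1}$ onto $\{0,1\}^2$ when one fixes the other entries. Consequently $\mathrm{Cov}(Y_x, Y_y) = 0$ for $x \neq y$, so
\[
\Var[X] \;=\; \sum_{x \in \sat(F)} \Var[Y_x] \;=\; \#F \cdot 2^{-m}(1 - 2^{-m}).
\]

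Finally, Chebyshev's inequality gives
\[
\Pr\bigl[|X - \E[X]| \ge \epsilon\,\E[X]\bigr] \;\le\; \frac{\Var[X]}{\epsilon^2\,\E[X]^2} \;=\; \frac{\#F\cdot 2^{-m}(1-2^{-m})}{\epsilon^2\,\#F^2\cdot 2^{-2m}} \;=\; \frac{(1-2^{-m})\,2^m}{\#F \cdot \epsilon^2},
\]
which is exactly the stated bound, since the complementary event $X \in (1-\epsilon, 1+\epsilon)\cdot 2^{-m}\cdot \#F$ is precisely $|X - \E[X]| < \epsilon\,\E[X]$.

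The only step that requires any care is the pairwise independence claim; everything else is a mechanical variance computation followed by Chebyshev. In particular, the essential observation is that the family $h_{A,b}(x) = Ax - b$ with $A, b$ uniform over $\gft$ is $2$-universal, which is what kills the covariance sum and makes the second moment linear in $\#F$ rather than quadratic.
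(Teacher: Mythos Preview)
Your argument is correct and is exactly the standard second-moment proof the paper alludes to (the paper does not give its own proof but simply calls the lemma ``straightforward'' and cites Goldreich's textbook). The only imprecision is in your phrasing of the bijection establishing pairwise independence: a map from $\{0,1\}^{n+1}$ onto $\{0,1\}^2$ cannot be a bijection; what you mean is that after fixing all entries of $A_i$ except one coordinate $j$ where $x_j\neq y_j$, the residual map $(A_{ij},b_i)\mapsto (A_i x - b_i,\,A_i y - b_i)$ is a bijection $\{0,1\}^2\to\{0,1\}^2$, which indeed yields uniformity of the pair and hence $\mathrm{Cov}(Y_x,Y_y)=0$.
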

The proof of this lemma is straightforward and can be found e.g. in Goldreich's book \cite{gol08}.
Secondly, we need a standard fact about the rank of random matrices such as the matrices obtained
in the above way. Consider a random $m \times n$ matrix $A$ as above with $m \le n$ and let $r$ 
denote its rank. The proof of the following lemma then follows easily, for example, from a result of
Bl\"omer, Karp and Welzl~\cite{blokarwel97}:
\begin{lemma}\label{lem:rank_lin_sys}
There is a constant $c$ such that $\El[r] \ge m-c$. 
Furthermore, $r \ge m - O(\log m)$ with probability at least $1 - O(m^{-1})$.
\end{lemma}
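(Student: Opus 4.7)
The plan is to analyze $A$ column-wise: since the entries are iid uniform over $\gft$, the $n$ columns of $A$ are independent uniform vectors in $\gft^m$, and $r$ equals the dimension of their span. Writing $d := m - r$ for the rank deficit, it suffices to control the tail probabilities $\El[d \ge s]$, from which both parts of the statement will follow.

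My first step would be to establish a tail bound of the form
\[
\Pr[d \ge s] \le C \cdot 2^{-s^2}
\]
for some absolute constant $C$. The event $\{d \ge s\}$ means that all $n$ columns lie in some $(m-s)$-dimensional subspace $W \subseteq \gft^m$, and a union bound over the $\binom{m}{s}_2$ such subspaces gives $\Pr[d \ge s] \le \binom{m}{s}_2 \cdot (2^{-s})^n$. Using the standard estimate $\binom{m}{s}_2 \le C \cdot 2^{s(m-s)}$ with $C := \prod_{i \ge 1}(1 - 2^{-i})^{-1}$, and the assumption $n \ge m$, this simplifies to $\Pr[d \ge s] \le C \cdot 2^{s(m-n) - s^2} \le C \cdot 2^{-s^2}$.

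Both conclusions then drop out. For the expectation statement I would write $\El[d] = \sum_{s \ge 1}\Pr[d \ge s] \le C \sum_{s \ge 1} 2^{-s^2}$, which is a finite absolute constant, so $\El[r] \ge m - c$ for some constant $c$. For the high-probability statement I would set $s := \lceil \sqrt{2\log_2 m}\, \rceil$ to obtain $\Pr[d \ge s] \le C/m^2 = O(m^{-1})$, giving in fact $r \ge m - O(\sqrt{\log m})$ (and in particular $r \ge m - O(\log m)$) with probability at least $1 - O(m^{-1})$.

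The only non-routine point is pinning down the Gaussian-binomial estimate so that the factor $\prod_{i \ge 1}(1 - 2^{-i})^{-1}$ is packaged as a single absolute constant and does not interact with the dominant $2^{-s^2}$ term; this follows from the elementary bound $(2^{m-s+i}-1)/(2^i-1) \le 2^{m-s}/(1-2^{-i})$ applied factorwise to $\binom{m}{s}_2 = \prod_{i=1}^s (2^{m-s+i}-1)/(2^i-1)$. Alternatively, since the paper of Blömer, Karp and Welzl treats the considerably more delicate sparse regime, one can simply specialize their results to the dense case and cite the lemma off the shelf.
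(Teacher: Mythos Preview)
Your argument is correct and entirely self-contained. The paper, by contrast, does not give a proof at all: it simply states that the lemma ``follows easily, for example, from a result of Bl\"omer, Karp and Welzl'' and moves on. So your approach is genuinely different in that you actually prove the statement, via the standard union bound over $(m-s)$-dimensional subspaces combined with the Gaussian-binomial estimate $\binom{m}{s}_2 \le C\cdot 2^{s(m-s)}$. This buys a cleaner and more informative argument---in particular, you obtain the sharper deficit $m - r = O(\sqrt{\log m})$ rather than the $O(\log m)$ stated in the lemma---whereas the paper's citation-based route has the advantage of brevity and of not reinventing a known fact. You even anticipate the paper's approach in your final sentence, so nothing is lost.
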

%
A third ingredient of the algorithm is the following Lemma, which is easily proved.
\begin{lemma}\label{lem:algo_all_linsols}
Let $Ax = b$ be a system of $\gft$-linear equations with solution set $\lA$.
There is an algorithm listing all solutions in time within a polynomial factor
of $|\lA|$.
\end{lemma}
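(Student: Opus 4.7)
The plan is to use Gaussian elimination over $\gft$ to put the system into a form from which the solution set can be enumerated variable by variable. Concretely, I would first apply standard $\gft$-Gaussian elimination to the augmented matrix $[A \mid b]$, producing its reduced row echelon form in time $\textup{poly}(n,m)$, where $m$ is the number of equations. If during this reduction a row of the form $(0,\ldots,0 \mid 1)$ appears, the system is inconsistent, so $\lA = \emptyset$ and the algorithm halts with empty output.

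Otherwise, let $r$ denote the rank of $A$ after reduction. The RREF naturally partitions the variables into $r$ pivot variables and $n-r$ free variables, and expresses each pivot variable as an explicit affine $\gft$-combination of the free variables. Hence $|\lA| = 2^{n-r}$, and every assignment of $\gft$-values to the free variables extends uniquely, via back-substitution, to an element of $\lA$. To enumerate $\lA$ I would then iterate through all $2^{n-r}$ truth assignments to the free variables (for instance in lexicographic order, or via a Gray code if one wishes to update the pivot values incrementally), and for each assignment compute the induced values of the pivot variables in time $O(nr)$. Summing up, the enumeration takes $O(nr \cdot 2^{n-r})$ time on top of the $\textup{poly}(n,m)$ spent on the initial elimination, which is at most a polynomial factor of $|\lA|$.

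Since every ingredient is textbook linear algebra, I do not expect any real obstacle in the argument. The only minor point to keep in mind is that in the degenerate case $|\lA| = 0$ the statement ``polynomial factor of $|\lA|$'' must be read as allowing the $\textup{poly}(n,m)$ overhead for detecting inconsistency; in every other case the enumeration cost dominates and the running time is literally $\textup{poly}(n,m)\cdot |\lA|$, as claimed.
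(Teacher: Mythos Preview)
Your proposal is correct and is exactly the standard argument the paper has in mind; the paper does not actually spell out a proof, merely stating the lemma is ``easily proved'' and later remarking that one fixes a basis of the column space of $A$, assigns values to the non-basis variables, and extends uniquely to a solution, which is precisely your Gaussian-elimination-plus-enumeration approach.
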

%
We are now ready to prove the Theorem.
\begin{proof}[Proof of Theorem~\ref{thm:approx_many_solutions}]
We start with the description of the algorithm.
Choose a random $\gft$-linear $n \times n$ system of equations $A x = b$.
Starting with a parameter $\nu = n$ and decreasing $\nu$ in each step, we
build random linear systems $A_nx = b_n, A_{n-1}x = b_{n-1},\ldots, A_\mu x = b_\mu$.
The system $A_{\nu}x = b_{\nu}$ is obtained from $A x = b$ by deleting the
last $n-\nu$ rows of $A$ and entries of $b$. Then $F_{\nu}$ denotes
the pair consisting of $F$ and $A_\nu x = b_\nu$. We say that $F_\nu$
is \emph{satisfiable} if there is a solution $x \in \sat(F)$ such that 
$A_\nu x = b_\nu$.

For each $\nu$ we rigorously determine whether
$F_{\nu}$ is satisfiable by using the algorithm of Lemma~\ref{lem:algo_all_linsols}
to list all solutions of the linear system and for each determine whether it satisfies $F$.
We let $u$ be the minimum $\nu \ge \mu$ such that $F_{\nu}$ is unsatisfiable. 
If all $F_{n}, \ldots, F_{\mu}$ are unsatisfiable, we set $u = \mu$.

To establish the time bound, note that the running time is dominated by the
time the algorithm spends in determining satisfiability of $F_{\mu}$.
By Lemma~\ref{lem:rank_lin_sys} the matrix $A_\mu$ has with high probability rank at least 
$m - O(\log n)$ and we have $|\lA| = O^*(2^{n-\mu})$ which yields the claimed time bound.  
Thus we shall in the following condition on the event that the rank of $A_\mu$ satisfies
this rank criterion.

\paragraph*{Correctness.} The correctness follows from standard arguments also used in the
classical approach~\cite{valvaz86}. We give a proof for completeness.
Let $f = \lceil\log \#F \rceil$. Assume first, for simplicity, that $\mu=0$. We will show
that with the desired probability, $u$ is a $16$-approximation to $\#F$.

First, consider the event that $u = f - c$ for some $c \ge 4$.
The probability $p_c$ of this event is
the probability that all of $F_n, \ldots, F_{u}$ are unsatisfiable.
Furthermore, conditional on $F_{u}$ being unsatisfiable, all $F_{\nu}$ for $\nu \ge u$ 
are unsatisfiable, as well. Hence, we have $p_c = \Pr[F_{u} \text{ is unsat }]$, and
by the Hashing Lemma \ref{lem:hash}, we have thus
$$
\Pr[F_{u} \text{ is unsat }] < \dfrac{2^{u}}{\#F} \le 2^{1-c}.
$$
By a union bound argument, we thus see that $f - 3 \le u$ with probability at least $3/4$.

Next, consider $u = f + c$ which implies that $F_{u-1}$ is satisfiable.
Applying the Hashing Lemma~\ref{lem:hash} 
with parameter $\epsilon = \zeta - 1$ for $\zeta = 2^{u-1}/\#F$,
we see that 
$$
\Pr[F_{u-1} \text{ is satisfiable}] < \dfrac{\zeta}{(\zeta - 1)^2}.
$$
As $\zeta \ge 1$, this bound is decreasing in $\zeta$, and hence in $u$,
therefore, the probability that $F_{u-1}$ is satisfiable is at most 
$2^{c-1}(2^{c-1} - 1)^{-2}$. Again, a union bound shows that
$u \le f + 3$ with probability at least $33/49$.

Next, note that if $\mu > 0$ then these findings do not change. Especially,
if $\mu < f - 3$ then the above result does not change. And 
if $\mu \ge f - 3$, then by the above, with probability at least $3/4$ we have $u = \mu$.

Taking into account that we have conditioned on $A_\mu$ having rank $m - O(\log m)$ the above
probabilities degenerate a bit. But the bounds claimed in the statement of the Theorem are still easily 
achieved.
\end{proof}

\paragraph*{Remarks.} Note that the use of listing algorithm of Lemma~\ref{lem:algo_all_linsols}
can be avoided by using a uniform sampling algorithm for the solutions of $A x = b$, this then yields 
essentially the same time bounds. Furthermore, uniform sampling is easily achieved by fixing a basis of
the column space of $A$, choosing u.a.r. assignments to non-basis variables and extending
these assignments to solutions of $Ax = b$.

\section{Open Problems}
It is a peculiar fact that our result falls short of yielding any reasonable time bound for the 
approximation of $\#2$-SAT. A direct application of the algorithm of Theorem~\ref{thm:main}
to this case would yield (using a polynomial time $2$-SAT subroutine) a bound of $O(1.4142^n)$
whereas the fastest exact method \cite{wah08} takes time only $O(1.2377^n)$. It would therefore
be interesting to develop an approximation algorithm which beats the bounds of these exact methods.

Secondly, the time bounds achieved in this paper are significantly better 
than those for known exact methods, but also, they are much worse than the bounds known for
the corresponding satisfiability problems. Is it possible to close this gap, maybe even
in terms of a purely algorithmic analog of Valiant and Vazirani's result?

\section{Acknowledgments}
I would like to thank Martin Grohe for bringing this problem to my attention
and for several helpful discussions on the topic. For further discussions, I would also like to thank
Holger Dell, Alistair Sinclair, and Piyush Srivastava.

\bibliographystyle{amsalpha}
\bibliography{bib}

\end{document}